 \newtheorem{theorem}{Theorem}[section]
\newtheorem{definition}[theorem]{Definition}
 \newtheorem{example}[theorem]{Example}
\newenvironment{proof}{{\em Proof.}}{\hfill$\square$\medskip}
\newcommand{\refToFigure}[1]{Figure~\ref{#1}}
\newcommand{\refToSection}[1]{Section~\ref{#1}}
\newcommand{\refToExample}[1]{Example~\ref{#1}}
\newcommand{\refToDef}[1]{Definition~\ref{#1}}
\newcommand{\inact}{\ensuremath{\mathbf{0}}}
\newcommand{\coDefGr}{::=_\rho}
\newcommand{\oup}[5]{\bigoplus_{#2\in#3}#1_#2!#4_#2;#5_#2}
\newcommand{\inp}[5]{\Sigma_{#2\in#3}#1_#2?#4_#2;#5_#2}
\newcommand{\PP}{\ensuremath{P}}
\newcommand{\Q}{\ensuremath{Q}}
\newcommand{\R}{\ensuremath{R}}
\newcommand{\pp}{{\sf p}}
\newcommand{\q}{{\sf q}}
\newcommand{\pr}{{\sf r}}
\newcommand{\ps}{{\sf s}}
\newcommand{\pt}{{\sf t}}
\newcommand{\la}{\ell}
\newcommand{\G}{\ensuremath{{\sf G}}}
\newcommand{\End}{\sf{End}}
\newcommand{\agtO}[6]{#1!\{#2_#3.#5_#3;#6_#3\}_{#3\in#4}}
\newcommand{\agtI}[6]{#1?\{#2_#3.#5_#3;#6_#3\}_{#3\in#4}}
\newcommand{\NamedCoRule}[5][]{\IInfer[#1]{#2}{ #3 }{#4}{#5}} 
\newcommand {\IInfer} [5] [] {
  \inferrule*[%
    fraction={===}, 
    left={\textsc{#2}},%
    right={$\begin{array}{l} #5 \end{array}$}, 
    vcenter,%
    #1
  ]%
  {#3}{#4}}
  \newcommand{\rn}[1]{{[\textsc{#1}]}}
  \newcommand{\tyn}[2]{#1\vdash #2}
 \newcommand{\pP}[2] {#1[\![\,#2\,]\!]}
 \newcommand{\Nt}{{\mathbb{N}}} 
  \newcommand{\parN}{\mathrel{\|}}
  \newcommand{\plays}[1]{\ensuremath{{\sf Players}(#1)}}
  \newcommand{\set}[1]{\{#1\}}
  \newcommand{\Msg}{\mathcal{M}} 
  \newcommand{\mq}[3]{\langle#1,#2,#3\rangle}
  \newcommand{\parG}{\mathrel{\|}}
  \newcommand{\addMsg}[2]{#1\cdot #2}
  \newcommand{\Participants}{\ensuremath{{\sf Part}}}
  \newcommand{\Messages}{\ensuremath{{\sf Lab}}}
  \newcommand{\val}{v}
  \newcommand{\confAs}[2]{#1\parN#2}
  \newcommand{\stackred}[1]{\xrightarrow{#1}}
  \newcommand{\CommAs}[3]{#1!#3.#2}
  \newcommand{\CommAsI}[3]{#1?#3.#2}
  \newcommand{\oi}{input/output}
  \newcommand{\asCom}{\beta}
  \newcommand{\pc}{~|~}
  \newcommand{\comseqA}{ \tau }
   \newcommand{\ee}{\epsilon}
\newcommand{\concat}[2]{\ensuremath{#1\,{\cdot}\,#2}}
\newcommand{\cardin}[1]{\!\!\pc\! #1\!\!\pc\!}
\newcommand{\play}[1]{\ensuremath{{\sf play}(#1)}}
\newcommand{\Seq}[2]{#1;#2}
\newcommand{\NamedRule}[5][]{ \Infer[#1]{#2}{ #3 }{#4}{#5} }
\newcommand {\Infer} [5] [] {
  \inferrule*[%
    left={\textsc{#2}},%
    right={$\begin{array}{l} {#5} \end{array}$}, 
    vcenter,%
    #1
  ]%
  {#3}{#4}}
  \newcommand{\co}{\beta}
  \newcommand{\ipth}{\xi}
\newcommand{\IPaths}[1]{{\sf Paths}(#1)}
\newcommand{\weight}{\ensuremath{{\sf depth}}}
\newcommand{\agtoneO}[4]{\Seq{#1!#2.#3}{#4}}
\newcommand{\agtIS}[3]{#1?#2.#3}
\newcommand{\agtSOS}[3]{#1!#3}
\newcommand{\ms}{{\sf m}}
\newcommand{\m}[2]{\ensuremath{{\sf rm}(#1,#2)}}
\newcommand{\n}[2]{{\text{\Large{$\iota$}}(#1,#2)}}
\newcommand{\MM}{\mathbb M}
\title{Asynchronous Sessions with Input Races}
\author{Ilaria Castellani\footnote{This research has been supported by the ANR17-CE25-0014-01 CISC project.} 
\institute{INRIA, Universit\'e C\^ote d'Azur, 
France} 
\email{ilaria.castellani@inria.fr}
\and
Mariangiola Dezani-Ciancaglini
\institute{Dipartimento di Informatica, Universit\`a di Torino, Italy} \email{dezani@di.unito.it}
\and
Paola Giannini\footnote{This original research has the financial support of the Universit\`a  del Piemonte Orientale.}
\institute{DiSSTE,
Universit\`{a} del Piemonte Orientale, Italy} \email{paola.giannini@uniupo.it}
}
\begin{document}

\maketitle


\begin{abstract}
  We propose a calculus for asynchronous multiparty sessions where input choices
  with different senders are allowed in processes.  We present a type
  system that accepts such input races provided they do not hinder
  lock-freedom.
\end{abstract}

\section{Introduction}

\label{intro}

The foundational work on multiparty sessions~\cite{CHY08} introduced
the notion of \emph{global type} for specifying the overall behaviour
of multiparty protocols. The criterion for a session implementation to
be correct with respect to its specification was formalised via the
notion of projection: each process implementing the behaviour of a
session participant was required to type-check against the local type
obtained by projecting the global type on that participant.

However, the work~\cite{CHY08} imposed strong restrictions on the
syntax of global types, requiring all initial communications in the
branches of a choice to have the same sender and the same
receiver, 
and every third participant\footnote{We call ``third participant'' any
  participant which is not involved in the first communication of a
  branch.} to have the same behaviour in all branches.  Although these
were useful simplifying assumptions in order to achieve multiparty
session correctness, they limited the expressiveness of global types,
ruling out relevant protocols. For this reason, more permissive choice
constructors were investigated in subsequent
work~\cite{CYH09,CDP12,DY12,LTY15,HY17,CDG18,CDGH19,
  JY20,GHH21,MMSZ21}. A widely adopted relaxation of the choice
operator, originally proposed in \cite{CYH09}, allows third
participants to behave differently in different branches, provided they
are notified of which branch has been chosen. Later
proposals~\cite{DY12,LTY15,HY17} accommodate processes with
output choices among different receivers, for instance a client
choosing among different servers.
On the other hand \emph{input races}, namely input choices among
different senders, continued to be considered as problematic. As a
consequence, common protocols such as a server shared by different
clients could not be specified by means of global types.

Recent proposals introduce more flexibility in input choices for
processes~\cite{CDG18,CDGH19, JY20,GHH21, MMSZ21}. The
work~\cite{GHH21} defines the property of \emph{race-freedom} for
sessions as the absence, at any stage of computation, of a branching
between communications from different senders towards the same
receiver leading to distinct target states. A rather permissive type
system is proposed, which is shown to be both sound and complete for a
range of liveness properties when restricting attention to race-free
sessions.
The work~\cite{JY20} also addresses the input race problem,
referred to as the ``+-problem'' there. While the proposed syntax for
global and local types is completely free, two well-formedness
conditions are imposed on types, which are meant to prevent dangerous
races.  Sessions are synchronous in~\cite{GHH21,JY20} 
and asynchronous in~\cite{MMSZ21}, which is the work
that is closest to ours.
In that paper,  input races are allowed under sophisticated
conditions on projections of global types. These conditions track
causalities between messages, and their soundness proof uses novel
graph-theoretic techniques from the theory of message-sequence charts.

  Consider for example the following session, where two participants
$\pp$ and $\q$ send concurrently a message to a third participant
$\pr$, which is ready to receive both messages in any order:
\begin{example}[Confluent input race]
\label{ex:confluent-race}
$\pP\pp{\pr!\la}\parN\pP\q{\pr!{\la'}}\parN\pP\pr{\pp?\la; \q? \la'
  +\q?\la'; \pp?\la}$
\end{example}\noindent
 No matter whether communication is synchronous or asynchronous,
this session incurs a race\footnote{ Either initially, if communication
  is synchronous, or after performing both outputs,
 if
  communication is asynchronous.}. 
However, this race may be viewed as innocuous since after any
branch is chosen, the input of the other branch is still available,
leading to the same target state. Indeed, the race consists here of a
choice between two different sequentialisations of concurrent
inputs\footnote{If we had a parallel construct $|$ for processes, this
  situation would be represented as $(\pp?\la\, |\, \q?\la')$.}.
This kind of input race will be called \emph{confluent}. 
 The above session  is well typed in~\cite{JY20}, but not in~\cite{GHH21, MMSZ21},
where the syntax of global types forbids different senders in a
choice.  In~\cite{GHH21}, this session is 
also ruled out by the race-freedom condition.

By contrast, the following asynchronous session, where there is an
apparent input race in the process of participant $\pr$, is actually
race-free according to~\cite{GHH21} because it cannot evolve to a
state in which both inputs of $\pr$ are simultaneously enabled.  
This kind of uneffective input race, which results from an agreement
between the senders such that in every computation only one of them
sends a message to the receiver, will be called
\emph{fake}. 
\begin{example}[Fake input race in asynchronous session]
\label{ex:fake-race}
\text{~}\\[3pt]
\centerline{
$\pP\pp{\q!a;\q?a;(\q!\la;\pr!b\oplus\q!\la')}\parN\pP\q{\pp!a;\pp?a;(\pp?\la+\pp?\la';\pr!b)}\parN\pP\pr{\pp?b+\q?b}$
}

\smallskip
This session implements the following protocol 
between Alice, Bob and Carol, represented by
participants $\pp$, $\q$ and $\pr$ respectively:\\
$\bullet$ Alice and Bob send each other the message ``I arrived'' and then
  they read their messages;\\
$\bullet$ Alice sends Bob either the message ``I will tell Carol'',
 after which  
she sends Carol the message ``We arrived'', or the message ``Please
  tell Carol'';\\
$\bullet$ Bob reads either the message ``I will tell Carol'', or the
  message ``Please tell Carol'' 
 after which  he sends Carol the message ``We
  arrived'';\\
$\bullet$ Carol reads the message ``We arrived'' with sender either Alice or Bob.
\end{example}
\noindent
The session of~\refToExample{ex:fake-race} cannot be typed
in~\cite{JY20,GHH21, MMSZ21} because the syntax of global types does
not allow two participants to exchange messages  by first
performing  both outputs and then both inputs. If we omit the
initial exchange of messages between Alice and Bob, 
 the resulting session  can be
typed in~\cite{GHH21, MMSZ21} but not in~\cite{JY20}.  

Our goal is to devise a type system for asynchronous sessions that is
permissive enough to accept the sessions of
\refToExample{ex:confluent-race} and \refToExample{ex:fake-race},
while rejecting dangerous races that could lead to deadlock or
starvation.   In particular, we will not be able to type the
sessions discussed in the introductions of~\cite{JY20,MMSZ21}, since
they both have a possibility of starvation.

For typing asynchronous sessions we 
use global types that
split communications into outputs and inputs, following the approach
advocated in~\cite{CDG21,DGD22}.
For instance, the session of \refToExample{ex:fake-race}  has 
the following global type: 
$
\pp!\q.a;\q!\pp.a;\pp?\q.a;\q?\pp.a;\pp!\{\q.\la;\G_1\,,\,
  \q.\la’; \G_2\}$ where $\G_1=\pp!\pr.b;\q?\pp.\la;\pr?\set{\pp.b\,,\,\q.b}$ and $\G_2=\q?\pp.\la’;\q!\pr.b;\pr?\set{\pp.b\,,\,\q.b}
  $. 
  Here
$\pp!\q.a$ denotes a send from $\pp$ to $\q$ of label $a$,
$\pp?\q.a$ denotes a receive by $\pp$ from $\q$ of label $a$, $\pp!\set{\q.\la
  ; \G_1\,,\, \q.\la’; \G_2}$ is an output choice with sender
$\pp$ and receiver $\q$, and $\pr?\set{\pp.b\,,\,\q.b}$ is an
input choice with receiver $\pr$ and senders $\pp$ and $\q$.

The rest of the paper is organised as follows. In
\refToSection{sec:as} we introduce our calculus for asynchronous
sessions. In \refToSection{sec:types} we define the syntax and
semantics of global types. In \refToSection{sec:ts} we present our
type system,  illustrate it with some examples,  and
establish the properties of Subject Reduction, Session Fidelity and
Lock-freedom. We conclude in \refToSection{sec:relw} with a discussion
on future and related work.

\section{Asynchronous Sessions}\label{sec:as}

We assume the following base sets: \emph{participants}, ranged
over by $\pp,\q,\pr$ and forming the set $\Participants$, and
\emph{labels}, ranged over by $\la,\la',\dots$ and forming the set
 $\Messages$.

 \begin{definition}[Processes]\label{p} 
  {\em Processes} are defined by:\\
\centerline{$\begin{array}{rcl}
\PP & \coDefGr  & 
\inact
\mid
\oup\pp{i}{I}{\la}{\PP}%
\mid
\inp\pp{i}{I}{\la}{\PP}%
\end{array}
$}
where $I\neq\emptyset$ and $\pp_h!\la_h\neq\pp_k!\la_k$ and
$\pp_h?\la_h\neq\pp_k?\la_k$  for  $h, k\in I$ and $h\neq k$.  
\end{definition}
 A process may be terminated, or it is an internal choice of
outputs or an external choice of inputs.  The symbol $ \coDefGr$,
in \refToDef{p} and in later definitions, indicates that the
productions should be interpreted \emph{coinductively} (they define
possibly infinite processes) and that we focus on \emph{regular}
terms, namely, terms with finitely many distinct subterms.  In this
way, we only obtain processes which are solutions of finite sets of
equations, see~\cite{Cour83}. So, when writing processes, we shall use
(mutually) recursive equations.

In the following, we will omit trailing $\inact$'s when writing
processes.  

In a full-fledged calculus,  processes would exchange
labels of the form $\la(\val)$, where $\val$ is a value.
For simplicity, we consider only pure labels here.

In our calculus,  asynchronous communication is handled in the standard
way, by storing sent labels in a queue together with sender and
receiver names. Receivers may then fetch messages from the queue
when required.
 We define \emph{messages} to be
triples $\mq\pp{\la}\q$,
where $\pp$ is the sender and $\q$ is the receiver, 
and \emph{message queues} (or simply \emph{queues}) to be possibly empty
sequences of messages: \\ 
\centerline{$\Msg::=\emptyset \mid \addMsg{\mq\pp{\la}\q}{\Msg} $}

The order of messages in the queue is the order in which they will be
read. Since the only reading order that matters is that between
messages with the same sender and the same receiver, we consider
message queues modulo the structural equivalence given by:\\
\centerline{$
\addMsg{\addMsg{\Msg}{\mq\pp{\la}\q}}{\addMsg{\mq\pr{\la'}\ps}{\Msg'}}\equiv
  \addMsg{\addMsg{\Msg}{\mq\pr{\la'}\ps}}{\addMsg{\mq\pp{\la}\q}{\Msg'}}
  ~~\text{if}~~\pp\not=\pr~~\text{or}~~\q\not=\ps 
$}

Sessions  are composed by a number of located processes of the form
$\pP{\pp}{\PP}$, each enclosed within a different
participant $\pp$, and by a message queue. 

\begin{definition}[Networks and Sessions] 
{\em  Networks} are defined by:\\
\centerline{$ \Nt = \pP{\pp_1}{\PP_1} \parN
\cdots \parN \pP{\pp_n}{\PP_n}$ with
 $\pp_h \neq \pp_k $ for any $h \neq k$}
{\em  Sessions} are defined by:\\
\centerline{$
\Nt \parallel \Msg$} where $\Nt$
   is a network,  and $\Msg$ is a message queue.
\end{definition}

We assume the standard structural congruence  $\equiv$  on
networks\footnote{ By abuse of notation, we use the same symbol as for
structural equivalence on queues.}, stating that
parallel composition is associative and commutative and has neutral
element $\pP\pp\inact$ for any fresh $\pp$.

If $\PP\neq\inact$ we write $\pP{\pp}{\PP}\in\Nt$ as short for
$\Nt\equiv\pP{\pp}{\PP}\parN\Nt'$ for some $\Nt'$.  This abbreviation
is justified by the associativity and commutativity of parallel composition.

To define the {\em operational semantics} of sessions, we use an LTS
whose transitions are decorated by outputs or inputs.  Therefore, we
define the set of {\em \oi\ communications} (
communications for short), ranged over by $\asCom$, $\asCom'$, to be
$\{ \CommAs{\pp}{\la}{\q}, \CommAsI{\pp}{\la}{\q} \pc
\pp,\q\in \Participants, \la \in \Messages\}$, where
$\CommAs{\pp}{\la}{\q}$ represents the  output  of  the  label $\la$ from
participant $\pp$ to participant $\q$, and $\CommAsI{\pp}{\la}{\q}$
the  input  by participant $\pp$ of the label $\la$ sent by
participant $\q$. 

The LTS semantics of networks,  defined modulo $\equiv$, is
specified by  the two Rules $\rn{Send}$ and $\rn{Rcv}$ given in
\refToFigure{fig:asynprocLTS}.  Rule $\rn{Send}$ allows a participant
$\pp$ with an internal choice (a sender) to send to the participant
$\q_k$ the label $\la_k$ by adding it to the queue. Symmetrically,
Rule $\rn{Rcv}$ allows a participant $\q$ with an external choice (a
receiver) to read the label $\la_k$ sent by participant $\pp_k$,
provided this label is among the $\la_j$'s specified in the choice.
Thanks to structural equivalence, the first 
message from $\pp_k$  to $\q$ that appears in the queue, if any, 
can always be moved to the top of the queue. 

 A key role in this paper  is played by (possibly empty) sequences of communications.
As usual we define them as traces.
\begin{definition}[Traces]\label{tra}  (Finite) traces are defined by:\\
\centerline{$\comseqA::=\ee\mid\concat\beta\comseqA$}
 We use $\cardin{\comseqA}$ to denote the length of the trace $\comseqA$.
\end{definition}
\noindent
When $\comseqA=\concat{\beta_1}{\concat\ldots{\beta_n}}$ ($n\geq 1)$
we write $\Nt\parN\Msg\stackred{\comseqA}\Nt'\parN\Msg'$ as short for\\
\centerline{$\Nt\parN\Msg\stackred{\beta_1}\Nt_1\parN\Msg_1\cdots\stackred{\beta_n}\Nt_{n}\parN\Msg_{n} 
 =  \Nt'\parN\Msg'$}
 
 \bigskip

We now introduce the notion of player,  which is characteristic of
asynchronous communication, where only one of the involved
participants is active, namely the sender for an output communication
and the receiver for an input communication. 
 The  player of a communication $\beta$ is  the 
participant who is active in $\beta$.  The set of players of a
trace is then obtained by collecting the players of all its communications. 

\begin{definition}[Players of
   communications and traces]
  \label{def:partAct}
  We denote by \play{\beta} the
  {\em player of a communication  
    $\beta$}
defined by\\
\centerline{
 $\play{\CommAs{\pp}{\la}{\q}}=
    \play{\CommAsI{\pp}{\la}{\q}}=\pp$ }
We denote by \plays{\comseqA} the
  {\em set of players of a trace  
    $\comseqA$}
defined by\\
\centerline{$\plays\ee=\emptyset\qquad\plays{\concat\beta\comseqA}=
 \set{\play\beta}\cup\plays\comseqA$}
 \end{definition}
 
 \begin{figure}
 \centerline{$
\begin{array}{c}  
\\[5pt]
\confAs{\pP{\pp}{\oup\q{i}{I}{\la}{\PP}}\parN\Nt}{\Msg} \stackred{\CommAs\pp{\la_k}{\q_k}}
  \confAs{\pP{\pp}{\PP_k}\parN\Nt}{\addMsg{\Msg}{\mq\pp{\la_k}{\q_k}}}\quad \text{ where  }\ 
   k \in I  \quad
   {~~~~~~\rn{Send}}
   \\[3pt]
\confAs{\pP{\q}{\inp\pp{j}{J}{\la}{\Q}}\parN\Nt}{\addMsg{\mq{\pp_k}{\la_k}\q}{\Msg}}\stackred{\CommAsI\q{\la_k}{\pp_k}}
 \confAs{\pP{\q}{\Q_k}\parN\Nt}{\Msg}\quad  \text{ where  }\ 
 k \in J \quad
  {~~~~~~\rn{Rcv}}
   \\[3pt]
\end{array}
$} 
\caption{LTS for sessions.}\label{fig:netredAs}\label{fig:asynprocLTS}
\end{figure}

\section{Global Types}\label{sec:types}

As in~\cite{CDG21,DGD22}, our global types can be obtained  from
the standard ones~\cite{CHY08,CHY16} by splitting output and
input communications. The novelty is that we 
allow multiple receivers in
output choices and multiple senders in input choices.

\begin{definition}[Global  types] 
\label{gt}
\emph{Global types} 
$\G$ are defined by the following grammar:\\ 
\centerline{$\begin{array}{rcl}
\G & \coDefGr &    \agtO{\pp}{\q}i I{\la}{\G}
              \mid \agtI \pp\q i I \la \G   
              \mid \End
\end{array}$}
where $I\neq\emptyset$, $\pp\neq \q_i$ for all $i\in I$ and
$\q_h.\la_h\not=\q_k.\la_k\,$ 
for $h, k\in I$ and $h\neq k$. 
\end{definition}

As for processes, $ \coDefGr$ indicates that global types are
coinductively defined and \emph{regular}.

The global type $\agtO{\pp}{\q}i I{\la}{\G}$ specifies that player
$\pp$ sends the label $\la_k$ with $k\in I$ to participant $\q_k$ and
then the interaction described by the global type $\G_k$ takes place.
The global type $\agtI \pp\q i I \la \G $ specifies that player $\pp$
receives the label $\la_k$ with $k\in I$ from participant $\q_k$ and
then the interaction described by the global type $\G_k$ takes place.

We
define $\plays{\G}$ as the smallest set satisfying the following equations:\\
\centerline{$\plays{ \End } = \emptyset \quad 
\plays{ \agtO{\pp}{\q}i I{\la}{\G}} = \plays{ \agtI \pp\q iI\la \G } =
\set{\pp}\cup\bigcup_{i\in I}\plays{\G_i} $} 
The regularity of global types ensures that the sets of players are finite.
In~\refToSection{sec:as}
we used the same notation for the players of traces. In all cases, the
context should make it easy to understand which function is in use.
 
To avoid starvation we require global types to satisfy a boundedness condition. 
To formalise boundedness we use $\ipth$ to denote a \emph{path} in
global type trees, i.e., a possibly infinite sequence of
communications $\co$.  Note that a finite path is a trace in the
sense of \refToDef{tra}. 
 We extend the notation $\cdot$ to denote 
also  the concatenation of a finite sequence with a possibly
infinite sequence.  
The function ${\sf Paths}$ gives the set of
\emph{paths} of a global type, which is the greatest set such that:\\
\centerline{$\begin{array}{ll} 
\IPaths{\End} &= \set{\epsilon}  \\ 
\IPaths{\agtO{\pp}{\q}i I{\la}{\G}} &= \bigcup_{i\in I} \set{ \concat{\CommAs\pp{\la_i}\q}{\ipth} \mid \ipth\in\IPaths{\G_i} } \\
\IPaths{\agtI \pp\q i I \la \G } &= \bigcup_{i \in I} \set{ \concat{\CommAsI\pp{\la_i}\q}{\ipth} \mid \ipth\in \IPaths{\G_i} } 
\end{array}$} 
\smallskip

 If $x \in \mathbf{N} \cup \set{\omega}$ is the
length of $\ipth$, we denote by $\ipth[n]$ the $n$-th communication in the path $\ipth$,
where $1\le n < x$. 
It is handy to define the \emph{depth} of a player $\pp$ in a global type 
$\G$, $\weight(\G,\pp)$. 

\begin{definition}[Depth of a  player]\label{def:depth}
Let $\G$ be a global type. 
For ${\ipth\in\IPaths{\G}}$ set\\ \centerline{$\weight(\ipth,\pp) = \inf \{ n \mid \play{\ipth[n]} = \pp \}$}
and define $\weight(\G,\pp)$, the \emph{depth} of $\pp$ in $\G$, as follows:\\ 
\centerline{$\weight(\G,\pp) = \begin{cases} 
\sup \{ \weight(\ipth,\pp) \mid \ipth \in \IPaths{\G} \}& \pp \in \plays{\G} \\ 
0 & \text{otherwise} 
\end{cases}
$}
\end{definition}
Note that $\weight(\G,\pp)=0$ iff  $\pp \not\in \plays{\G}$. 
 Moreover, if  $\pp\ne\play{\ipth[n]}$  for 
all $n\in\mathbf{N}$,  then $\weight(\ipth,\pp) = \inf\, \emptyset = \infty$. 
Hence, if $\pp$ is a player of a global type $\G$ 
and there is some path in $\G$ where $\pp$ does not occur as a player, 
then $\weight(\G,\pp) = \infty$.

\begin{definition}[Boundedness]\label{def:bound}
A global type $\G$ is \emph{bounded} if  $\weight(\G',\pp)$ is finite
for all participants $\pp\in\plays{\G}$ and  all types  
$\G'$ which occur in   $\G$. 
\end{definition}
\begin{example}\label{b} The following example shows the  necessity  of considering all types  occurring in a global type for defining boundedness. 
Consider 
 $\G= \agtoneO \pr\q {\la}{\agtIS\q\pr  {\la};\G'} $,  where\\
\centerline{$\G'=\agtSOS \pp\q {\{\Seq{\q.\la_1}{\agtIS \q \pp \la_1;\agtoneO \q\pr {\la_3}\agtIS\pr\q {\la_3}}\, ,\,\Seq{\q.\la_2}{\Seq{\agtIS \q\pp \la_2}{\G'}}\}}
$} Then we have: $
 \weight(\G,\pp)=3,\weight(\G,\q)=2 ,\weight(\G,\pr)=1
$, 
whereas 
$\weight(\G',\pp)=1,\weight(\G',\q)=2,\weight(\G',\pr)=\infty
$.
\end{example}
Since global types are regular the boundedness condition is decidable.

\begin{figure}
\begin{math} 
\begin{array}{c}
\NamedRule{{\rn{Top-Out}}}
 { } 
 {\agtO{\pp}{\q}i I{\la}{\G}\parG\Msg \stackred{\CommAs\pp{\la_h}{\q_h}}\G_h\parG\addMsg\Msg{\mq\pp{\la_h}{\q_h}}}
 {h\in I}
\\[4ex]
\NamedRule{{\rn{Top-In}}}
 { } 
 {\agtI \pp\q i I\la \G\parG\addMsg{\mq{\q_h}{\la_h}\pp}\Msg \stackred{\CommAsI{\q_h}{\la_h}\pp}\G_h\parG\Msg}
 {h\in I}
\\[4ex]
\NamedRule{{\rn{Inside-Out}}}
 {\G_i\parG \Msg\cdot\mq\pp{\la_i}{\q_i} \stackred\asCom\G'_i \parG \Msg'\cdot\mq\pp{\la_i}{\q_i} \quad \forall i \in I} 
 {\agtO{\pp}{\q}i I{\la}{\G}\parG\Msg \stackred \asCom\agtO{\pp}{\q}i I{\la}{\G'}\parG\Msg' }  
 {\pp\ne\play{\asCom} }
\\[4ex]
\NamedRule{{\rn{Inside-In}}}
 {\G_{ j  }\parG \Msg\stackred\asCom\G_{ j  }'\parG \Msg'\quad \forall { j } \in J}
 {\agtI \pp\q i I \la \G\parG \Msg \stackred\asCom
 \agtI \pp\q i {I} \la {\G'}\parG  {\Msg'} 
}  
 {\begin{array}{c} J= 
\m{\set{\mq{\q_i}{\la_i}\pp}_{i\in I}}\Msg\ne\emptyset\\
 \pp\ne\play{\asCom} \quad  
     \beta \neq \q_l!\pp. {\la_l} \\
\G'_l = \G_k ~~ k\in J~~\forall l\in I\backslash J  
 \end{array}}
\end{array}
\end{math} 
\caption{LTS for type configurations. }\label{fig:ltsgtAs}
\end{figure}

Global types in parallel with queues, dubbed \emph{type
  configurations}, are given semantics by means of the LTS in
\refToFigure{fig:ltsgtAs}.  The first two rules allow top level
outputs and inputs to be performed in the standard way.  The remaining
two rules allow communications to be performed inside output and input
choices.  These inside rules are needed to enable interleaving between
independent communications despite the sequential structure of global
types.  For example, we want to allow
$\Seq{\CommAs\pp{\la}\q}{\CommAs\pr{\la'}\ps}\parG\emptyset\stackred{\CommAs\pr{\la'}\ps}{\CommAs\pp{\la}\q}\parG\mq\pr{\la'}\ps$
when $\pp\neq\pr$, because, intuitively, outputs performed by
different players should be independent.  This justifies the condition
$\pp\ne\play{\asCom}$ in Rules \rn{Inside-Out} and \rn{Inside-In}. In
Rule \rn{Inside-Out} we require all branches to be able to perform
the $\co$ transition. This avoids for example:\\
\centerline{$\begin{array}{c}\pp!\set{\q.\la;\q?\pp.\la;\pr!\pp.\la;\pp?\pr.\la\,,\,\q.\la';\q?\pp.\la';\pr!\pp.\la';\pp?\pr.\la'} \parN\emptyset\stackred{\pr!\pp.\la}\\
    \pp!\set{\q.\la;\q?\pp.\la  ; \pp?\pr.\la\,,\,  \q.\la';\q?\pp.\la';\pr!\pp.\la';\pp?\pr.\la'} \parN\mq\pr\la\pp\end{array}$} \\
which, in case we choose the right branch, leads to the configuration
$\pp?\pr.\la' \parN\addMsg{\mq\pr\la\pp}{\mq\pr{\la'}\pp}$.

The shapes of the queues  appearing in the premise of  Rule \rn{Inside-Out}  ensure  that
$\co$ is not the matching input for any output in the choice. In Rule
\rn{Inside-In}, we consider only the branches with corresponding
messages on top of the queue  (called {\em live} branches),  using the index set of ready
messages  
$\m{\set{\mq{\q_i}{\la_i}\pp}_{i\in I}}\Msg$  
defined as follows, where $\ms$ ranges over messages. 
\begin{definition}\label{rm} Given a set of messages 
  $\set{\ms_i}_{i\in I}$ and a queue $\Msg$,
  the index set of the ``ready messages''  in this set  is defined by: 
  $\m{\set{\ms_i}_{i\in I}}\Msg
=\set{i\in
      I\mid\Msg\equiv\addMsg{\ms_i}{\Msg_i}}$.
 \end{definition}
 \noindent
 The mapping
 ${\sf rm}$ plays a crucial role also in the typing rule for input
 choices, as we will see in \refToSection{sec:ts}.  
  The condition $J\neq\emptyset$ means that there is at least one live branch. 
 The
 condition $\beta \neq \q_l!\pp. {\la_l}$  for all  $l\in I\backslash J$
 ensures that the occurrence of $\beta$ does not 
generate a message that would  ``awaken''  some dead,  i.e. not live,   branch of the choice.
 In the resulting choice,
 the dead branches become an arbitrary live branch (condition 
 $\G'_l = \G_k$ for some $k\in J$ and for all  $l\in I\backslash J$). 
In fact, such branches could also be omitted as they can never be awaken.

 It is easy to check that the LTS of type configurations preserves
 boundedness of global types.  Therefore, from now on we will
assume all our global types to be bounded.

\section{Type System}\label{sec:ts}

\begin{figure}[h]
\begin{math}
\begin{array}{c}
\NamedCoRule{\rn{End}}{}{ \tyn\End{\pP\pp\inact\parN\emptyset} }{} \\[6ex] 
\NamedCoRule{\rn{Out}}{
  \tyn{\G_i}{\pP\pp{\PP_i}\parN\Nt\parN\Msg\cdot\mq\pp{\la_i}{\q_i}}\\ \plays{\G_i}\setminus\set\pp=\plays\Nt\ \ \forall i \in I
}{ \tyn{\agtO{\pp}{\q}{i}{I}{\la}{\G}}{\pP\pp{\oup{\q}{i}{I}{\la}{\PP}}\parN\Nt\parN\Msg}}{} 
\\[6ex] 
\NamedCoRule{\rn{In}}{
  \tyn{\G_{j}}{\pP\pp{\PP_{j}}\parN\Nt\parN\Msg_{j}}\quad\n{\G_j}{\MM}\quad \forall  j  \in J
\\
  \plays{\G_{ i  }}\setminus\set\pp=\plays\Nt\quad \forall  i  \in I 
}{ \tyn{\agtI{\pp}{\q}i I{\la}{\G}}{\pP\pp{\inp{\q}{h}{H}{\la}{\PP}}\parN\Nt\parN\Msg} }
{\begin{array}{l}
J=\m{\set{\mq{\q_i}{\la_i}\pp}_{i\in I}}\Msg \\
~~=\m{\set{\mq{\q_h}{\la_h}\pp}_{h\in H}}\Msg\neq\emptyset\\
  \Msg\equiv\addMsg{\mq{\q_{j}}{\la_{j }}\pp}{\Msg_{j}}~\forall j\in J\\
 \MM=\set{{\mq{\q_{ l }}{\la_{ l  }}\pp}\mid l \in (I\cup H) \setminus J\  \&\ \q_l\neq\q_j\ \forall j\in J} \end{array}}
\end{array} 
\end{math}
\caption{Typing rules for sessions.}\label{fig:cntr} 
\end{figure}

Global types are an abstraction of sessions. Usually,
global types are projected to participants, yielding local types which
are assigned to processes. The simplicity of our
calculus and the flexibility of our global types allow us to formulate
a type system where global types are directly derived for sessions,
using  judgements of the form $\tyn\G{\Nt\parN\Msg}$. The typing rules
are given in~\refToFigure{fig:cntr}. 

Rules \rn{Out} and \rn{In} just add simultaneously outputs and inputs
to global types and to  the corresponding processes inside
networks.  The condition $ \plays{\G_i}\setminus\set\pp=\plays\Nt$ for
all $i \in I$ ensures that all players in $\Nt$ are also players in
$\G$.  For example, this condition prevents the derivation of 
$\tyn\G{\pP\pp\PP\parN\pP\q\Q}$ with $\G=\Seq{\CommAs\pp{\la}\q}\G$
and $\PP=\Seq{\q!\la}\PP$ and $\Q$ arbitrary. 

Rule \rn{Out} considers all branches of the global type, since the
choice of the sent message is arbitrary. This rule requires that the
session 
resulting from the output of a branch be typed with the corresponding
branch of the global type.

Rule \rn{In} requires that the global type and the process read the
same messages on the queue.  To this end, it uses the index set of
ready messages defined in \refToDef{rm}, collecting the indices of the
live branches of the global type and of the input process\footnote{ As
  for global types, a branch of the input process is live if it has a
  corresponding message on top of the queue, and dead otherwise.}, and
asking them to be equal (condition 
$\m{\set{\mq{\q_i}{\la_i}\pp}_{i\in I}}\Msg=\m{\set{\mq{\q_h}{\la_h}\pp}_{h\in H}}\Msg$.  This
set of indices must not be empty (condition $J\neq\emptyset$).  Only
the branches of the global type and of the input process thus selected
are compared 
in the premises of Rule \rn{In}. Note that in this way
we allow more freedom than in the synchronous subtyping for session types~\cite{DH12}. 
In Rule \rn{In},   in order to ensure 
the condition $\beta \neq
\q_l!\pp. {\la_l}$ for all $l\in I\backslash J$ required by the 
 transition  Rule \rn{Inside-In},  we want 
 to prevent the enqueuing of 
messages 
that would transform a dead branch of the process or of the global
type into a live branch.
 To this end, we introduce a predicate  which forbids a global type to generate such
messages.  Let  $\MM$ range over sets of messages.
\begin{definition} The type $\G$ is {\em inactive} for the set of messages $\MM$, if $\n\G\MM$ holds, where:\\ 
\centerline{
$\begin{array}{c}
\n\End\MM
\quad\quad\quad\n{\agtI{\pp}{\q}{i}{I}{\la}{\G}}\MM\quad\text{if}\ \n{\G_i}{\MM}~~\forall  i  \in I\\
\n{\agtO{\pp}{\q}{i}{I}{\la}{\G}}\MM\quad\text{if}\ {\mq\pp{\la_i}{\q_i}\not\in\MM} \ \text{and}\ \n{\G_i}{\MM}~~\forall  i  \in I
 \end{array}
$}
\end{definition}
\noindent The predicate $\n\G\MM$ looks for outputs in $\G$ which
produce messages in $\MM$. The regularity of global types 
 guarantees   
the
computability of this predicate.  Notice that $\n\G\MM$
also ensures that the network cannot produce messages in $\MM$. This is due to
the typing Rule \rn{Out} prescribing that messages put on the queue by the global type be the same as the ones of the network.

For example consider the following   sequence of transitions \\
\centerline{$\begin{array}{lll}\pP\q{\pr!\la'}\parN\pP\pr{\pp?\la;\q?\la'+\q?\la';\pp?\la'}\parN\mq\pp\la\pr&\stackred{\CommAs\q{\la'}{\pr}}&\pP\pr{\pp?\la;\q?\la'+\q?\la';\pp?\la'}\parN\mq\pp\la\pr\cdot\mq\q{\la'}\pr\\
    &\stackred{\CommAsI\pr{\la'}{\q}}&\pP\pr{\pp?\la'}\parN\mq\pp\la\pr\end{array}$}
Since the input and the message in $\pP\pr{\pp?\la'}\parN\mq\pp\la\pr$
do not match, this session cannot be typed and therefore also the
session
$\pP\q{\pr!\la'}\parN\pP\pr{\pp?\la;\q?\la'+\q?\la';\pp?\la'}\parN\mq\pp\la\pr$
should not be typable.
Without checking the inactivity predicate we can type this session by the global type\\
\centerline{$(\ast)\quad\pr?\set{\pp.\la;\q!\pr.\la';\pr?\q.\la',\q.\la';\q!\pr.\la';\pr?\pp.\la'}$}
as follows:\\
\centerline{$\prooftree \prooftree \prooftree
  \tyn{\End}{\pP\pr{\inact}\parN\emptyset} \Justifies
  \tyn{\pr?\q.\la'}{\pP\pr{\q?\la'}\parN\mq\q{\la'}\pr}
\endprooftree
\Justifies
\tyn{\q!\pr.\la';\pr?\q.\la'}{\pP\q{\pr!\la'}\parN\pP\pr{\q?\la'}\parN\emptyset}
\endprooftree
\Justifies
\tyn{\pr?\set{\pp.\la;\q!\pr.\la';\pr?\q.\la',\q.\la';\q!\pr.\la';\pr?\pp.\la'}}{\pP\q{\pr!\la'}\parN\pP\pr{\pp?\la;\q?\la'+\q?\la';\pp?\la'}\parN\mq\pp\la\pr}
\endprooftree
$}
\smallskip

\noindent
The problem here is that Rule \rn{In} does not check the
 dead branches of the global type. 
The role of the inactivity predicate is just to ensure that the
transitions will not awake dead branches.
This is done by checking the outputs in the 
 live branches. 
 In  this  example the output $\q!\pr.\la'$ is in the branch
 starting with the input $\pr?\pp.\la$ and the queue contains
 $\mq\pp\la\pr$.  So the typing Rule \rn{In} cannot be applied since
 $\n{\q!\pr.\la';\pr?\q.\la'}{\set{\mq\q{\la'}\pr}}$ does not hold.

Notice that the session in \refToExample{ex:confluent-race} has the transition\\
\centerline{$\begin{array}{lll}\pP\pp{\pr!\la}\parN\pP\q{\pr!{\la'}}\parN\pP\pr{\pp?\la;
      \q? \la'+\q?\la'; \pp?\la}\parN \emptyset 
    &\stackred{\CommAs\pp{\la}{\pr}}&\pP\q{\pr!{\la'}}\parN\pP\pr{\pp?\la;
      \q? \la'+\q?\la'; \pp?\la}\parN\mq\pp\la\pr\end{array}$} 
      
      \smallskip \noindent
and the resulting session differs from that of the previous example
only for the label of the last input.  Correspondingly, the global
types 
$\pr?\set{\pp.\la;\q!\pr.\la';\pr?\q.\la'\,,\,\q.\la';\underline{\pr?\pp.\la}}$
and
$\pr?\set{\pp.\la;\q!\pr.\la';\pr?\q.\la'\,,\,\q.\la';\underline{\pr?\pp.\la'}}$
only differ for the labels of the underlined inputs.  Therefore
$\pr?\set{\pp.\la;\q!\pr.\la';\pr?\q.\la'\,,\,\q.\la';{\pr?\pp.\la}}$ cannot be derived for the session
$\pP\q{\pr!{\la'}}\parN\pP\pr{\pp?\la; \q? \la'+\q?\la';
  \pp?\la}\parN\mq\pp\la\pr$, since as we just saw 
the predicate
$\n{\q!\pr.\la';\pr?\q.\la'}{\set{\mq\q{\la'}\pr}}$ does not
hold. 
In fact, this is expected since the session can do a transition
$\stackred{\CommAs\q{\la'}{\pr}}$ that the type configuration cannot mimic. 
 On
the other hand, this session can be typed 
by the global type\\ \centerline{$\q!\pr.\la';\pr?\set{\pp.\la;\pr?\q.\la'\,,\,\q.\la';\pr?\pp.\la}$}
as follows:\\
  \centerline{$\prooftree
  \prooftree
  \prooftree \tyn{\End}{\pP\pr{\inact}\parN\emptyset}  \Justifies \tyn{\pr?\q.\la'}{\pP\pr{\q? \la'}\parN\mq\q{\la'}\pr}\endprooftree
  \qquad  \prooftree \tyn{\End}{\pP\pr{\inact}\parN\emptyset}  \Justifies \tyn{\pr?\pp.\la}{\pP\pr{\pp?\la}\parN\mq\pp\la\pr}\endprooftree
  \Justifies
  \tyn{\pr?\set{\pp.\la;\pr?\q.\la'\,,\,\q.\la';\pr?\pp.\la}}{\pP\pr{\pp?\la; \q? \la'+\q?\la'; \pp?\la}\parN\mq\pp\la\pr\cdot\mq\q{\la'}\pr}
  \endprooftree
  \Justifies
  \tyn{\q!\pr.\la';\pr?\set{\pp.\la;\pr?\q.\la'\,,\,\q.\la';\pr?\pp.\la}}{\pP\q{\pr!{\la'}}\parN\pP\pr{\pp?\la; \q? \la'+\q?\la'; \pp?\la}\parN\mq\pp\la\pr}
\endprooftree
$}  
\smallskip

\noindent
In this derivation  Rule \rn{Out} is applied first, and thus Rule
\rn{In} is applied only when the queue contains the matching messages for
both branches. Therefore Rule \rn{In} checks the continuations of both branches, and 
the inactivity predicate holds trivially for each of them.
%
%

Notice that 
 bringing forward 
the output from $\q$ to $\pr$ in the global
type  $(\ast)$ 
does not   enable us to type:  
\\ \centerline{$\pP\q{\pr!\la'}\parN\pP\pr{\pp?\la;\q?\la'+\q?\la';\pp?\la'}\parN\mq\pp\la\pr$} 
In fact,  we cannot complete  the derivation:\\[3pt]
  \centerline{$\prooftree
  \prooftree
  \prooftree \tyn{\End}{\pP\pr{\inact}\parN\emptyset}  \Justifies \tyn{\pr?\q.\la'}{\pP\pr{\q? \la'}\parN\mq\q{\la'}\pr}\endprooftree
  \qquad  \tyn{\pr?\pp.\la'}{\pP\pr{\pp?\la'}\parN\mq\pp\la\pr}
  \Justifies
  \tyn{\pr?\set{\pp.\la;\pr?\q.\la'\,,\,\q.\la';\pr?\pp.\la'}}{\pP\pr{\pp?\la; \q? \la'+\q?\la'; \pp?\la'}\parN\mq\pp\la\pr\cdot\mq\q{\la'}\pr}
  \endprooftree
  \Justifies
  \tyn{\q!\pr.\la';\pr?\set{\pp.\la;\pr?\q.\la'\,,\,\q.\la';\pr?\pp.\la'}}{\pP\q{\pr!{\la'}}\parN\pP\pr{\pp?\la; \q? \la'+\q?\la'; \pp?\la'}\parN\mq\pp\la\pr}
\endprooftree
$} 

\smallskip
\noindent
 Indeed,  we cannot apply Rule \rn{In}  to derive the top
right judgement 
$\tyn{\pr?\pp.\la'}{\pP\pr{\pp?\la'}\parN\mq\pp\la\pr}$, since the
only input does not match the message in the queue.  

We can also type the following recursive version of \refToExample{ex:confluent-race}:\\
\centerline{$\pP\pp{\PP}\parN\pP\q{\Q}\parN\pP\pr{\R}$}
where $\PP=\pr!\la  ; \PP$, $\Q=\pr!{\la'};\Q$ and $\R=\pp?\la; \q? \la';\R+\q?\la'; \pp?\la;\R$. A suitable global type is\\ 
\centerline{$\G=\pp!\pr.\la;\q!\pr.\la';\pr?\set{\pp.\la;\pr?\q.\la';\G\,,\,\q.\la';\pr?\pp.\la;\G}$}
as shown  by  the following derivation:\\
  \centerline{$\prooftree
  \prooftree
  \prooftree
  \prooftree 
  \prooftree
       \vdots 
  \Justifies
  \tyn{\G}{\pP\pp{\PP}\parN\pP\q{\Q}\parN\pP\pr{\R}\parN\emptyset}
   \endprooftree  
  \Justifies \tyn{\pr?\q.\la';\G}{\pP\pp{\PP}\parN\pP\q{\Q}\parN\pP\pr{\q? \la';\R}\parN\mq\q{\la'}\pr}\endprooftree
  \qquad  \prooftree 
     \prooftree
       \vdots 
  \Justifies
    \tyn{\G}{\pP\pp{\PP}\parN\pP\q{\Q}\parN\pP\pr{\R}\parN\emptyset} 
    \endprooftree 
    \Justifies \tyn{\pr?\pp.\la;\G}{\pP\pp{\PP}\parN\pP\q{\Q}\parN\pP\pr{\pp?\la;\R}\parN\mq\pp\la\pr}
  \endprooftree
  \Justifies
  \tyn{\pr?\set{\pp.\la;\pr?\q.\la';\G\,,\,\q.\la';\pr?\pp.\la;\G}}{\pP\pp{\PP}\parN\pP\q{\Q}\parN\pP\pr{\R}\parN\mq\pp\la\pr\cdot\mq\q{\la'}\pr}
  \endprooftree
  \Justifies
  \tyn{\q!\pr.\la';\pr?\set{\pp.\la;\pr?\q.\la';\G\,,\,\q.\la';\pr?\pp.\la;\G}}{\pP\pp{\PP}\parN\pP\q{\Q}\parN\pP\pr{\R}\parN\mq\pp\la\pr}
\endprooftree
 \Justifies
  \tyn{\G}{\pP\pp{\PP}\parN\pP\q{\Q}\parN\pP\pr{\R}\parN\emptyset}
\endprooftree
$}

\bigskip

Our type system enjoys the properties of Session Fidelity and Subject
Reduction.  Moreover, it ensures the semantic property of
Lock-freedom.  Since every participant can freely perform outputs, to
prove this property we only have to show that all inputs can be
enabled.  For lack of space we only give 
the most interesting case in
the proof of Subject Reduction.
\begin{theorem}[Session Fidelity]\label{thm:sf}
If $\tyn{\G}{\Nt\parN\Msg}$ and 
$\G\parG\Msg\stackred\co\G'\parN\Msg'$, then $\Nt\parN\Msg\stackred{\co}\Nt'\parN\Msg'$ and  $\tyn{\G'}{\Nt'\parN\Msg'}$. 
\end{theorem}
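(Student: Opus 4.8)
The plan is to proceed by induction on the derivation of the transition $\G\parG\Msg\stackred\co\G'\parG\Msg'$, with a case analysis on the last rule of \refToFigure{fig:ltsgtAs} that is applied. In each case the shape of $\G$ determines which typing rule of \refToFigure{fig:cntr} concluded $\tyn\G{\Nt\parN\Msg}$ --- \rn{Out} if $\G$ is an output choice, \rn{In} if it is an input choice, \rn{End} if $\G=\End$, in which case $\G\parG\Msg$ has no transition and there is nothing to prove --- so inverting that typing rule (inversion remains valid for the coinductive system of \refToFigure{fig:cntr}, the conclusion fixing the applicable rule and its premises) exposes $\Nt\equiv\pP\pp{\PP}\parN\Nt''$ with $\PP$ a choice process on $\pp$, and hands us the typing premises to be reused.

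The two ``top'' rules are immediate. If the last rule is \rn{Top-Out}, then $\G=\agtO\pp\q iI\la\G$, $\co=\CommAs\pp{\la_h}{\q_h}$, $\G'=\G_h$ and $\Msg'=\Msg\cdot\mq\pp{\la_h}{\q_h}$ for some $h\in I$; inversion of \rn{Out} gives $\Nt\equiv\pP\pp{\oup\q iI\la\PP}\parN\Nt''$, the session fires \rn{Send} on branch $h$, and $\tyn{\G_h}{\pP\pp{\PP_h}\parN\Nt''\parN\Msg\cdot\mq\pp{\la_h}{\q_h}}$ is exactly the $h$-th premise of \rn{Out}. If the last rule is \rn{Top-In}, then $\G$ is an input choice and $\Msg$ carries the consumed message at its head; inversion of \rn{In} gives $\Nt\equiv\pP\pp{\inp\q hH\la\PP}\parN\Nt''$, and the side condition of \rn{In} equating the ready messages of $\G$ with those of the process --- which, by the no-duplicate-label conditions of \refToDef{p} and \refToDef{gt}, puts the corresponding live branches in bijection --- lets the session fire \rn{Rcv} on the matching branch; typability of the residual is then the premise of \rn{In} indexed by the branch whose head message was consumed.

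For the two ``inside'' rules $\pp\ne\play\co$, with $\pp$ the player of the choice $\G$. After inverting the typing judgment to $\Nt\equiv\pP\pp{\PP}\parN\Nt''$, the induction hypothesis is applied branch by branch. In \rn{Inside-Out} the queue $\Msg\cdot\mq\pp{\la_i}{\q_i}$ attached to $\G_i$ in the $i$-th transition premise is literally the queue of the $i$-th premise of \rn{Out}, so the induction hypothesis gives, for each $i\in I$, a move $\pP\pp{\PP_i}\parN\Nt''\parN\Msg\cdot\mq\pp{\la_i}{\q_i}\stackred\co\Nt'_i\parN\Msg'\cdot\mq\pp{\la_i}{\q_i}$ with $\tyn{\G'_i}{\Nt'_i\parN\Msg'\cdot\mq\pp{\la_i}{\q_i}}$. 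Since $\play\co\ne\pp$ this move does not touch $\pP\pp{\PP_i}$, and since the queue shapes of \rn{Inside-Out} forbid $\co$ from being the matching input of $\mq\pp{\la_i}{\q_i}$, its label and its reduct do not depend on whether that trailing message is present (modulo the structural equivalence on queues); hence it acts uniformly on $\Nt''\parN\Msg$, and we extract a single $\Nt''\parN\Msg\stackred\co\Nt'''\parN\Msg'$, recombine into $\Nt\parN\Msg\stackred\co\pP\pp{\PP}\parN\Nt'''\parN\Msg'$, and retype with $\G'=\agtO\pp\q iI\la{\G'}$ by \rn{Out}. Rule \rn{Inside-In} is handled the same way, except that the $j$-th transition premise $\G_j\parG\Msg\stackred\co\G'_j\parG\Msg'$ carries the full $\Msg$ while the $j$-th premise of \rn{In} carries $\Msg_j$, namely $\Msg$ with its head message $\mq{\q_j}{\la_j}\pp$ removed; an auxiliary lemma on the type LTS --- that a head message whose receiver is not $\play\co$ may be stripped from both endpoints of a transition --- rewrites the premise as $\G_j\parG\Msg_j\stackred\co\G'_j\parG\Msg'_j$, after which the induction hypothesis applies as before.

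I expect \rn{Inside-In} to be the crux. Beyond the queue-stripping lemma, re-establishing the side conditions of \rn{In} for the residual type needs the fact that $\co$ cannot ``awaken'' a dead branch of the choice: this is precisely what the side condition $\beta\ne\q_l!\pp.\la_l$ of \rn{Inside-In}, together with the inactivity predicates $\n{\G_j}\MM$ supplied by the typing premises --- and the fact, enforced by \rn{Out}, that the network emits no message not emitted by the type --- buy us, so that the ready-message index set and the witness set $\MM$ are preserved by the move. One also needs the invariant relating the players of the components of $\G$ to $\plays{\Nt''}$ to be carried through the induction hypothesis, including for the dead branches (whose continuations \rn{Inside-In} resets to an arbitrary live branch). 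The remaining points --- matching queues modulo $\equiv$, checking that the reconstructed session move is the one delivered by the induction hypothesis, and the bookkeeping of the base cases --- are routine.
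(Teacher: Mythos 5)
The paper never actually proves Session Fidelity---it only gives the key case of Subject Reduction---so the comparison is necessarily with that companion proof. Your strategy is sound and is, in substance, the mirror image of the paper's: you induct on the derivation of the type transition, while the paper (for Subject Reduction) inducts on $\weight(\G,\play\co)$; since the Top rules are exactly the depth-one case and the Inside rules strictly decrease the player's depth in every branch, the two measures coincide and the case analyses line up rule for rule. Your handling of the two delicate points is right: (i) recombining the per-branch networks $\Nt'_i$ into a single residual is legitimate because the label $\co$, together with the pairwise-distinctness conditions of \refToDef{p}, determines the chosen branch, so the network LTS acts deterministically on the components not owned by the choice's player; (ii) re-establishing Rule \rn{In} for the residual input choice rests on the side condition $\beta \neq \q_l!\pp.\la_l$ together with the predicates $\n{\G_j}{\MM}$, exactly as in the paper's Subject Reduction argument. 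The one step you should promote from ``auxiliary lemma'' to a proved statement is the queue-stripping property needed in the \rn{Inside-In} case (from $\G_j\parG\Msg\stackred\co\G'_j\parG\Msg'$ with $\Msg\equiv\addMsg{\mq{\q_j}{\la_j}\pp}{\Msg_j}$ and $\play\co\neq\pp$, infer $\G_j\parG\Msg_j\stackred\co\G'_j\parG\Msg'_j$): it is not a triviality, because a nested input choice with player $\pp$ occurring inside $\G_j$ computes its ready-message set against the full $\Msg$ and could see a different first message from $\q_j$ once the head message is removed, so one must argue that typability (or boundedness) excludes such interference. The paper silently uses the converse implication at the corresponding point of its Subject Reduction proof, so this is a shared debt rather than a defect specific to your proposal, but it is where the real work hides.
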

\begin{theorem}[Subject Reduction]\label{thm:sr}
If $\tyn{\G}{\Nt\parN\Msg}$ and $\Nt\parN\Msg\stackred\beta\Nt'\parN\Msg'$, 
then $\G\parG\Msg\stackred\beta\G'\parG\Msg'$ and $ \tyn{\G'}{\Nt'\parN\Msg'}$.  
\end{theorem}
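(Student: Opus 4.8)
The plan is to prove Subject Reduction by induction on $\weight(\G,\play\beta)$, which is a finite number: $\G$ is bounded by our standing assumption, and $\play\beta\in\plays\G$ because $\beta$ is fired by a process of $\Nt$, hence $\play\beta\in\plays\Nt$, while the side conditions of Rules $\rn{Out}$ and $\rn{In}$ force $\plays\Nt\subseteq\plays\G$. The first move is inversion on $\tyn\G{\Nt\parN\Msg}$: since $\Nt$ reduces it is not $\pP\pp\inact$, so the judgement is the conclusion of Rule $\rn{Out}$ or Rule $\rn{In}$, and $\G$ has the shape $\agtO\pp\q iI\la\G$ (resp.\ $\agtI\pp\q iI\la\G$) with a matching output (resp.\ input) process located at $\pp$ inside $\Nt$. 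Here $\pp$ is the \emph{head player} of $\G$, the one with $\weight(\G,\pp)=1$, since every path of $\G$ starts with a communication played by $\pp$. The proof then splits on whether $\play\beta=\pp$ or $\play\beta\neq\pp$, and in each sub-case exhibits simultaneously the matching transition of $\G\parG\Msg$ and a typing derivation of the reduct.

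If $\play\beta=\pp$, then $\Nt$ performs a top-level communication of $\pp$. When $\G=\agtO\pp\q iI\la\G$, Rule $\rn{Send}$ gives $\beta=\CommAs\pp{\la_k}{\q_k}$ for some $k\in I$; Rule $\rn{Top-Out}$ yields $\G\parG\Msg\stackred\beta\G_k\parG\Msg\cdot\mq\pp{\la_k}{\q_k}$, and the $k$-th premise of $\rn{Out}$ is exactly the required typing of the reduct. When $\G=\agtI\pp\q iI\la\G$, Rule $\rn{Rcv}$ lets $\pp$ read some message $\mq\q\la\pp$ off the queue; the side conditions of $\rn{In}$ — equality of the two index sets of ready messages and $J\neq\emptyset$ — guarantee that this message also triggers Rule $\rn{Top-In}$ on the type, and the corresponding premise of $\rn{In}$ types the reduct. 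In both sub-cases the typing derivation of the reduct is a sub-derivation of the given one.

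If $\play\beta\neq\pp$, write $\Nt\equiv\pP\pp{\PP}\parN\Nt_0$, so $\beta$ is fired by a process of $\Nt_0$ and must be mimicked \emph{inside} the choice of $\G$. Take $\G=\agtO\pp\q iI\la\G$ (the input case being analogous). For each $i\in I$ the premise of $\rn{Out}$ types $\pP\pp{\PP_i}\parN\Nt_0\parN\Msg\cdot\mq\pp{\la_i}{\q_i}$, and $\beta$ is still enabled there: if $\beta$ is a receive, the message it consumes is already available in $\Msg$ (since $\beta$ was enabled for $\Msg$) and, by the structural equivalence on queues, remains reachable to the top after appending $\mq\pp{\la_i}{\q_i}$; if $\beta$ is a send, the message it enqueues commutes past $\mq\pp{\la_i}{\q_i}$ precisely because $\play\beta\neq\pp$. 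Since every head communication of $\G$ is played by $\pp\neq\play\beta$, we have $\weight(\G_i,\play\beta)<\weight(\G,\play\beta)$ for all $i$, so the induction hypothesis applies branchwise, producing $\G_i\parG\Msg\cdot\mq\pp{\la_i}{\q_i}\stackred\beta\G'_i\parG\Msg'\cdot\mq\pp{\la_i}{\q_i}$ and a typing of $\pP\pp{\PP_i}\parN\Nt'_0\parN\Msg'\cdot\mq\pp{\la_i}{\q_i}$ by $\G'_i$, with $\Nt'_0$ and $\Msg'$ independent of $i$. Rule $\rn{Inside-Out}$ then gives $\G\parG\Msg\stackred\beta\agtO\pp\q iI\la{\G'}\parG\Msg'$, and reapplying Rule $\rn{Out}$ to the branchwise typings yields the typing of the reduct; its remaining side conditions $\plays{\G'_i}\setminus\set\pp=\plays{\Nt'_0}$ follow from the elementary invariant that $\tyn\G{\Nt\parN\Msg}$ implies $\plays\G=\plays\Nt$. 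The input case $\G=\agtI\pp\q iI\la\G$ goes through Rule $\rn{Inside-In}$, restricted to the live branches $J$: one checks that $\beta$ consumes no message addressed to $\pp$ (as $\play\beta\neq\pp$), so $J$ is preserved, the induction hypothesis applies to the $\G_j$ with $j\in J$, and Rule $\rn{In}$ is reapplied — using a short auxiliary lemma that the inactivity premises $\n{\G'_j}{\MM}$ are preserved along type reduction.

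The genuinely delicate point, which I expect to be the main obstacle, is hidden in that last case: Rule $\rn{Inside-In}$ can mimic $\beta$ only if $\beta$ does not generate a message reviving a dead branch of $\G$'s choice, i.e.\ $\beta\neq\q_l!\pp.\la_l$ for $l\in I\setminus J$; and to re-establish typing via $\rn{In}$, $\beta$ must not revive a dead branch of the input process either. This is exactly what the inactivity predicate in Rule $\rn{In}$ secures: its premise provides $\n{\G_j}{\MM}$ for every live $j\in J$, where $\MM$ collects precisely the messages that would awaken a currently dead branch of $\G$ or of the process; and since Rule $\rn{Out}$ forces a network and its type to enqueue the same messages, a network typed by such a $\G_j$ — in particular the one containing the process of $\Nt_0$ that fires $\beta$ — cannot emit a message in $\MM$. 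Turning this into a clean lemma (``$\tyn\G{\Nt\parN\Msg}$ and $\n\G\MM$ imply that no reduct of $\Nt\parN\Msg$ contains a message of $\MM$''), together with the careful bookkeeping about ready-message index sets when reapplying Rule $\rn{In}$ and the observation that the resulting derivation, built by local reshaping of a regular one, is again regular, is where the real work lies; the remaining ingredients — the two top-level cases, the player-set invariant, and the queue equivalences — are routine.
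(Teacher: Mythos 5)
Your proposal is correct and follows essentially the same route as the paper: induction on $\weight(\G,\play\beta)$, a case split on whether $\play\beta$ is the head player of $\G$ (top-level cases via Rules $\rn{Top-Out}$/$\rn{Top-In}$, inductive cases via $\rn{Inside-Out}$/$\rn{Inside-In}$), with the inactivity predicate $\n{\G_j}\MM$ and the fact that $\play\beta$ differs from the receiver doing exactly the work of preserving the live-branch set $J$ when Rule $\rn{In}$ is reapplied. The case you single out as the delicate one is precisely the one the paper chooses to present in full.
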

\begin{proof} The proof is by induction on $d=\weight(\G,\pp)$ where
  $\pp=\play\beta$. Notice that
  $\Nt\parN\Msg\stackred\beta\Nt'\parN\Msg'$ implies
  $\pp\in\plays\Nt$, which together with $\tyn{\G}{\Nt\parN\Msg}$
  implies $\pp\in\plays\G$. Then $d> 0$.  Moreover $d$ is finite since
  $\G$ is bounded.\\Let $d>1$ and $\G=\agtI{\pr}{\q}i I{\la}{\G}$ with
  $\pr\ne\pp$. Since $\tyn{\G}\Nt\parN\Msg$ must be derived using Rule
  \rn{In}, we
  get:\\
  \centerline{
    $\Nt\equiv\pP\pr{\inp{\q}{h}{H}{\la}{\R}}\parN\Nt_0$\qquad
    $\tyn{\G_{ j }}{\pP\pr{\R_{ j }}\parN\Nt_0\parN\Msg_{ j }}$ for
    all $j\in J$\qquad
    $ \n{\G_j}{\MM}$ for all  $j\in J$}\\
  where $J=\m{\set{\mq{\q_i}{\la_i}\pp}_{i\in
      I}}\Msg  =\m{\set{\mq{\q_h}{\la_h}\pp}_{h\in H}}\Msg$ and $
  \Msg\equiv\addMsg{\mq{\q_{j }}{\la_{ j }}\pr}{\Msg_{ j }}$ for all
  $j \in J$ and $\MM=\set{{\mq{\q_{ l }}{\la_{ l }}\pr}\mid l \in (I
    \cup H) \setminus J \ \&\ \q_l\neq\q_j\ \forall j\in J}$.  The
  condition $\pr\ne\pp$ ensures that the transition
  $\Nt\parN\Msg\stackred\beta\Nt'\parN\Msg'$ does not modify the
  process of participant $\pr$ and does not dequeue any message with
  receiver $\pr$ from $\Msg$.  Therefore we get
  $\Nt'\equiv\pP\pr{\inp{\q}{h}{H}{\la}{\R}}\parN\Nt'_0$ and $
  \Msg'\equiv\addMsg{\mq{\q_{j }}{\la_{ j }}\pr}{\Msg'_{ j }}$ for all
  $j \in J$. Moreover the transition can be done also if the process
  $\inp{\q}{h}{H}{\la}{\R}$ is replaced by an arbitrary process and
  top messages with receiver $\pr$ are dequeued. Therefore \\
 \centerline{
  $ \pP\pr{\R_j}\parN\Nt_0\parN\Msg_j\stackred\beta
    \pP\pr{\R_j}\parN\Nt_0'\parN\Msg'_j\text{ for all }j\in J $} 
    It is
  easy to verify that $\weight(\G_j,\pp)< \weight(\G,\pp)$. Then by
  induction we get $\G_j\parG\Msg_j\stackred\beta\G_j'\parG\Msg_j'$
  and $ \tyn{\G'_j}{\pP\pr{\R_j}\parN\Nt'_0}\parG\Msg_j'$ for all
  $j\in J$.  Let  $\G'=\agtI \pr\q i {I} \la {\G'}$ where $\G'_l =
  \G_{j_0}$ for some $j_0\in J$ and all $l\in I\backslash J$.   From
  $\G_j\parG\Msg_j\stackred\beta\G_j'\parG\Msg_j'$ we get
  $\G_j\parG\Msg\stackred\beta\G_j'\parG\Msg'$ for all $j \in J$. Then
  we derive $\G\parG \Msg \stackred\asCom \G'\parG \Msg' $ by Rule
  \rn{Inside-In}.  The condition $ \n{\G_j}{\MM}$ for all $j\in J$
  ensures that $\Msg'$ cannot contain a message $\mq{\q_k}{\la_k}\pr$
  with $k\in (I\cup H)\setminus J$ and $\q_k\neq\q_j$ for all $j\in
  J$.  The condition $\pr\neq\pp$ ensures that the transition
  $\stackred\beta$ cannot dequeue a message with $\pr$ as
  receiver. Hence $\m{\set{\mq{\q_h}{\la_h}\pp}_{h\in
      H}}{\Msg'} =J$.
  It is easy to verify that $ \n{ \G_j}{\MM}$ implies $
  \n{\G'_j}{\MM}$ for all $j\in J$.  From $
  \tyn{\G'_j}{\pP\pr{\R_j}\parN\Nt'_0}\parG\Msg_j'$ for all $ j \in
  J$ we get 
  $\plays{\G'_{ i }}\setminus\set\pr=\plays{\Nt'_0}$ for all $ i \in
  I$.  
Then  all premises of Rule \rn{In} hold  
and we can derive $\tyn{\G'}{\Nt'\parN\Msg'}$.
\end{proof}
\begin{theorem}[Lock-freedom]\label{thm:lf}
If $ \tyn{\G}\Nt\parN\Msg$  
and $\pP\pp\PP\in\Nt$,  then $\Nt\parN\Msg\stackred{\comseqA\cdot\co}$ with $\play\co=\pp$ for some $\comseqA$, $\co$.  
\end{theorem}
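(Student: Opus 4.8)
The plan is to show that any input-prefixed process located at a participant $\pp$ that occurs in a typed session can eventually fire an input, by reasoning on the depth $\weight(\G,\pp)$, which is finite by boundedness. First I would dispense with the easy case: if $\PP$ is an output choice (or $\inact$), then by Rule \rn{Send} the transition $\Nt\parN\Msg\stackred{\pp!\la_k.\q_k}$ is immediately available with player $\pp$, so we take $\comseqA=\ee$. The substantive case is when $\PP=\inp{\q}{h}{H}{\la}{\PP}$ is an input choice; here I would argue by induction on $d=\weight(\G,\pp)$, which is positive since $\pp\in\plays\Nt$ forces $\pp\in\plays\G$ via the side conditions of Rules \rn{Out} and \rn{In}.

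The key observation is that the typing derivation for $\tyn\G{\Nt\parN\Msg}$, read together with the LTS for type configurations, tells us how to make progress. If $\G$ itself is an input choice with player $\pp$, i.e.\ $\G=\agtI\pp\q i I\la\G$, then Rule \rn{In} guarantees the live-branch index set $J=\m{\set{\mq{\q_i}{\la_i}\pp}_{i\in I}}\Msg$ is non-empty and coincides with the live branches of the process, so $\Nt\parN\Msg$ can immediately perform an input with player $\pp$ by Rule \rn{Rcv}, and we are done with $\comseqA=\ee$, $\co=\q_j?\la_j.\pp$ for some $j\in J$. Otherwise, the first communication of $\G$ has some other player $\pr\neq\pp$ (the case where $\G$'s head is an output or input of $\pr$). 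Here I would use Session Fidelity (\refToTheorem{thm:sf}) in the following way: the head communication $\co_0$ of $\G$ — say the top output $\pr!\la_{h}.\q_h$ if $\G$ starts with an output, or, if $\G$ starts with an input whose matching message is already on the queue, that top input — is a transition $\G\parG\Msg\stackred{\co_0}\G_0\parG\Msg_0$ of the type configuration, hence by Session Fidelity it is matched by $\Nt\parN\Msg\stackred{\co_0}\Nt_0\parN\Msg_0$ with $\tyn{\G_0}{\Nt_0\parN\Msg_0}$; moreover this step does not touch the located process $\pP\pp\PP$ (since $\pr\neq\pp$, Rules \rn{Send}/\rn{Rcv} leave $\pp$'s process and, in the input case, $\pp$'s incoming messages untouched up to $\equiv$), so $\pP\pp\PP\in\Nt_0$ still holds. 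Crucially, $\weight(\G_0,\pp)<\weight(\G,\pp)$ because consuming the head communication strictly decreases the infimum defining the depth of $\pp$ along every path. By the induction hypothesis applied to $\tyn{\G_0}{\Nt_0\parN\Msg_0}$ we obtain $\Nt_0\parN\Msg_0\stackred{\comseqA_0\cdot\co}$ with $\play\co=\pp$, and prefixing with $\co_0$ gives the required trace $\comseqA=\co_0\cdot\comseqA_0$.

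The one subtlety — and the main obstacle — is the case where $\G$ starts with an input of some $\pr\neq\pp$ but the matching message is \emph{not} yet on the queue, so the head input is not immediately enabled at the type level. The type configuration LTS handles this through Rules \rn{Inside-In} and \rn{Inside-Out}, which let communications happen inside choices; the point is that because $\G$ is bounded, the player whose output feeds $\pr$'s awaited message must itself have finite depth, and one can schedule finitely many such internal steps (each decreasing $\weight(-,\pp)$ is \emph{not} automatic, so one needs a secondary induction or a measure combining $\weight(\G,\pr)$-style quantities) until $\pr$'s input becomes enabled, at which point the argument above applies. I would structure this as a lexicographic induction on the pair $(\weight(\G,\pp), \weight(\G,\pr))$ — or more robustly, observe that \refToTheorem{thm:sf} plus boundedness let us always fire \emph{some} communication that is closer to an input of $\pp$, and that no infinite such sequence avoids $\pp$ forever precisely because $\weight(\G,\pp)$ is finite. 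Making this scheduling argument precise, i.e.\ exhibiting the finite trace $\comseqA$ that drives the configuration to a state where an input of $\pp$ is top-enabled, is where the real work lies; everything else is bookkeeping with the side conditions of Rules \rn{In}, \rn{Out}, \rn{Inside-In}.
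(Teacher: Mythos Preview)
The paper omits the proof of Lock-freedom for lack of space, so there is no explicit argument to compare against; your overall strategy---induction on $\weight(\G,\pp)$, dispatching the output case immediately via Rule \rn{Send}, and otherwise peeling off the head communication of $\G$ using Session Fidelity---is the natural one and is essentially correct.

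The ``main obstacle'' you identify in your final paragraph, however, does not exist. If $\tyn{\G}{\Nt\parN\Msg}$ with $\G=\agtI{\pr}{\q}i I\la\G$, then the derivation necessarily ends with Rule \rn{In}, whose side condition requires
\[
J=\m{\set{\mq{\q_i}{\la_i}\pr}_{i\in I}}\Msg\neq\emptyset.
\]
Hence a message $\mq{\q_j}{\la_j}\pr$ is already at the head of the queue for some $j\in J$, Rule \rn{Top-In} applies, and Session Fidelity supplies the matching session step $\co_0$ with $\play{\co_0}=\pr$. There is simply no well-typed configuration whose global type begins with an input lacking a ready message. The lexicographic induction on $(\weight(\G,\pp),\weight(\G,\pr))$ and the whole discussion of scheduling via \rn{Inside-In}/\rn{Inside-Out} are therefore unnecessary: drop the last paragraph and the argument goes through as a plain induction on $\weight(\G,\pp)$. (One small slip: your parenthetical ``or $\inact$'' in the easy case is vacuous, since by convention $\pP\pp\PP\in\Nt$ already entails $\PP\neq\inact$.)
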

As expected, since queues in type configurations can arbitrarily grow, our type system is undecidable. 
\begin{theorem}[Undecidability]\label{thm:und}
Typing is undecidable. 
\end{theorem}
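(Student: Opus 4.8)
The plan is to reduce the halting problem for deterministic Turing machines to typability. Given a deterministic Turing machine $M$, started for concreteness on the empty input, I would build a session $\Nt_M\parN\emptyset$ with three participants: a \emph{control} participant $\pp$, a \emph{tape} participant $\q$, and a \emph{spectator} $\pt$ whose process is just $\pp?{\sf done}$. The encoding is the classical simulation of a tape by a queue: the current tape contents, with a marker for the head position, are kept in the shared message queue, and $\pp$ and $\q$ alternately ``rotate'' the tape, each reading the messages produced by the other, rewriting the symbol under the head and moving the marker as prescribed by $M$'s transition table, and re-enqueuing the symbols for the partner. The finite control of $M$ is encoded in the regular process of $\pp$ (processes are coinductive but regular, so the finite state-transition table of $M$ is expressible). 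I would arrange matters so that, upon detecting that $M$ has entered a halting state, $\pp$ performs a short termination handshake with $\q$ that empties the queue and drives both $\pp$ and $\q$ to $\inact$, and finally sends the label ${\sf done}$ to $\pt$. A routine faithfulness lemma then gives: $M$ halts iff $\Nt_M\parN\emptyset$ eventually reaches a state in which ${\sf done}$ has been enqueued, in which case the session in fact reduces to a terminated state (all processes $\inact$, empty queue) after finitely many steps.

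The key step is the equivalence ``$\Nt_M\parN\emptyset$ is typable iff $M$ halts''. The ``only if'' direction is immediate from \refToTheorem{thm:lf}: if $M$ does not halt, then ${\sf done}$ is never enqueued, so the only action available to $\pt$ is never enabled, i.e.\ there are no $\comseqA,\co$ with $\Nt_M\parN\emptyset\stackred{\comseqA\cdot\co}$ and $\play\co=\pt$; hence, by the contrapositive of \refToTheorem{thm:lf}, there is no $\G$ with $\tyn\G{\Nt_M\parN\emptyset}$. Note that this covers uniformly both the case where $M$ diverges using unbounded tape and the case where it loops within a bounded region.

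For the ``if'' direction, suppose $M$ halts after $T$ steps, so that it visits only finitely many tape cells. By design, the simulation proceeds under a strict token-passing discipline — at every non-final reachable state exactly one participant can move — so $\Nt_M\parN\emptyset$ has a unique maximal reduction sequence, which is finite and ends in the terminated state. I would type it by the finite, \emph{path-shaped} global type $\G$ that records this trace, writing each input node $\agtI\pp\q iI\la\G$ with the single index matching the symbol actually read (Rule \rn{In} permits this, since it compares only the sets of \emph{ready} indices of the global type and of the process, not the whole index sets), and closing with $\End$ once every participant is $\inact$ and the queue is empty. This $\G$ is regular (it is finite) and bounded: being path-shaped it has a single path, so every participant occurring in it occurs on that path and hence has finite depth in $\G$ and in all of its suffixes. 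The side conditions of Rules \rn{Out} and \rn{In} hold by construction: since every input choice in the simulation has a single sender, the set $\MM$ of Rule \rn{In} is empty and the predicate $\n\G\MM$ holds vacuously (so Rules \rn{Inside-Out}/\rn{Inside-In} are never needed in the derivation either); and the condition $\plays{\G_i}\setminus\set\pp=\plays\Nt$ holds at every node because the participants active in $\Nt$ are exactly those still mentioned deeper in $\G$ — in particular $\pt$ remains a player of every suffix of $\G$, its single input sitting at the bottom. Thus $\tyn\G{\Nt_M\parN\emptyset}$, completing the reduction; since halting is undecidable, so is typability. The main obstacle is squarely here: engineering the simulation inside this particular calculus — a single shared queue, communications split into separate outputs and inputs, coinductive but regular process terms — so that it is simultaneously faithful to $M$ and deterministic, and then checking that the natural path-shaped global type really discharges every side condition of the typing rules together with boundedness. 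The negative direction, and the device of the spectator $\pt$ making divergence observable as a lock, are by contrast routine.
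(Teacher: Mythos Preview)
The paper does not actually prove \refToTheorem{thm:und}; it is stated without argument, accompanied only by the remark ``As expected, since queues in type configurations can arbitrarily grow, our type system is undecidable.'' So there is no proof in the paper to compare against, and your proposal supplies what the paper omits.

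Your overall strategy --- reduce halting, use a passive spectator $\pt$ so that non-termination of the simulation manifests as a lock, and for the converse type the (finite, deterministic) run with a path-shaped global type --- is the natural one and is sound at the level of detail you give. The negative direction via \refToTheorem{thm:lf} is clean. For the positive direction there is one point you should make explicit. You argue carefully that Rule \rn{In} compares only the \emph{ready} indices, so the global type may carry a single input branch while the process carries several; but Rule \rn{Out} is stricter --- it requires the global type's output $\agtO{\pp}{\q}iI{\la}{\G}$ and the process's output $\oup\q iI\la\PP$ to share the \emph{same} index set $I$, with every branch typed. Hence a path-shaped global type can only type a session whose processes contain exclusively singleton output choices. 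This is perfectly compatible with simulating a deterministic machine (all branching lives in the reads, never in the writes), but it is a real constraint on the encoding that you should state and discharge rather than leave implicit. With that settled, your checks that $\MM=\emptyset$ under single-sender inputs, that a finite path is bounded, and that the $\plays$ side-condition holds at every node (because a participant is active in $\Nt_k$ iff it still acts later in the trace, the trace ending in all-$\inact$) all go through as you describe.
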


In order to recover from  this undecidability result, 
we can define an inductive version of typing, thus obtaining a sound algorithm. 
This inductive definition 
follows the  standard pattern  to deal with regular structures  for global types, and it requires the same queue at the beginning and at the end of each cycle.

\section{Related Work and Conclusion}\label{sec:relw}

We proposed flexible choice operators for an asynchronous 
multiparty  session calculus, in order to ensure the classical
session correctness properties for a larger class of protocols than is
usually done. Several other proposals for relaxing the constraints of
the original choice operator of~\cite{CHY08} were already mentioned
in~\refToSection{intro}.  We now discuss some of them in more detail.

In~\cite{CDG18}, which builds on~\cite{HY17}, we pushed this
flexibility even further by allowing 
input choices with different senders in processes,
without restrictions. 
The same approach was followed in~\cite{CDGH19}.
However, this liberal approach turned out to be incorrect, as pointed out
in~\cite{GHH21}, as it allows the following (synchronous) network to
be typed, while it is not deadlock-free. Indeed, this network can reach a
deadlock if $\pp$ chooses its second branch, leading both $\ps$ and
$\pt$ to choose their second branch too. Then, if $\pr$ chooses its
first branch, it will be unable to complete it.\\
\centerline{\begin{math}
\begin{array}{c}
\pP\pp{(\ps!a;\pt!a;\pr!d)\oplus(\ps!b;\pt!b)} \parN
\pP\pr{(\ps?c;\pt?e;\pp?d)+(\pt?e;\ps?c)} \parN\\
\pP\ps{(\pp?a;\pr!c)+(\pp?b;\pr!c)} \parN
\pP\pt{(\pp?a;\pr!e)+(\pp?b;\pr!e)}
\end{array} 
\end{math}}
In fact, this session is not race-free according to the
\emph{race-freedom} condition proposed in~\cite{GHH21}.  Note that the
asynchronous session obtained by composing this network with the empty
queue is not typable in our type system. Indeed, its typability would
contradict Subject Reduction or Lock-freedom, since it has the
derivative $\pP\pr{\pp?d}\parN\emptyset$ which is stuck.  This session
cannot be typed in~\cite{JY20}  either,  since participant $\pr$ does not
satisfy the  required  well-formedness conditions. Also the type system
of~\cite{MMSZ21} rejects this session, the reason being that
participant $\pr$ can read the same message in more than one
 branch.  More precisely, participant $\pr$ can read the message $c$
from participant $\ps$ and the message $e$ from participant $\pt$ in
both branches. This control is realised by annotating projections with
the set of available messages. We take advantage of queues in type
configurations for a similar but less refined control, which uses the
predicate ensuring  that a global type is inactive for a given set
of messages.

As future work, we plan  to investigate  three different variations
of our typing. The first one would be a weakening of condition 
$\n\G\MM$ in Rule \rn{In}, 
taking into account the order of sent messages and the expected
inputs.  The second one would be a strengthening of our typing in
order to forbid orphan messages.  The last one would be a weakening of
our typing in order to allow optional participants in the branches of
choices, possibly using connecting communications as
in~\cite{HY17,CDG18}.

 Another direction that would be worth investigating is the
relationship  between our approach and input races for sessions
based on Classical Linear Logic, see~\cite{KMW20,QKB21}.

To make  our type system more efficient  
we  will  
  design 
two algorithms, taking inspiration from~\cite{DGD22}, one for inferring global types for networks and the other one for checking the correctness of global types for queues, allowing also cycles in which queues  
increase. 

\textbf{Acknowledgments} This paper  came into being 
thanks to Ross Horne, who pointed out to us the example
reported in the Conclusion. We are indebted to him for many
interesting discussions on this subject and for suggestions on a
previous version of this paper. We  also  
thank the anonymous referees for helpful comments.

\bibliographystyle{eptcs}
\bibliography{sesdoi}

\end{document}